\documentclass{llncs}
\usepackage{tabularx,booktabs,multirow,delarray,array}
\usepackage{graphicx,amsmath,amssymb}
\usepackage{latexsym}


\def\calP{\mathcal{P}}
\def\calY{\mathcal{Y}}
\def\calX{\mathcal{X}}

\def\calL{\mathcal{L}}

\def\calH{\mathcal{H}}

\def\bbR{\mathbb{R}}
\def\Ed{\mathsf{E}\mathrm{d}}

\newtheorem{observation}{Observation}


\begin{document}

\title{Computing the Rectilinear Center of Uncertain Points in the Plane}

\author{Haitao Wang
\and
Jingru Zhang
}

\institute{
Department of Computer Science\\
Utah State University, Logan, UT 84322, USA\\
\email{haitao.wang@usu.edu,jingruzhang@aggiemail.usu.edu}\\
}

\maketitle

\pagestyle{plain}
\pagenumbering{arabic}
\setcounter{page}{1}

\begin{abstract}
In this paper, we consider the rectilinear one-center problem on
uncertain points in the plane. In this problem,
we are given a set $\calP$ of $n$ (weighted) uncertain points in the plane
and each uncertain point has $m$ possible locations each associated
with a probability for the point appearing at that location.
The goal is to find a point $q^*$ in the plane which minimizes the
maximum expected rectilinear distance from $q^*$ to all uncertain points
of $\calP$, and $q^*$ is called a {\em rectilinear center}.
We present an algorithm that solves the problem
in $O(mn)$ time.  Since the input size of the problem
is $\Theta(mn)$, our algorithm is optimal.
\end{abstract}

\section{Introduction}\label{Sec:1}
In the real world, data is inherently uncertain due to many facts, such as
the measurement inaccuracy, sampling discrepancy, resource limitation, and so on.
A large amount of work has recently been done on uncertain data,
e.g., \cite{ref:AgarwalIn09,ref:AgarwalNe12,ref:AgarwalCo14,ref:DongDa07,ref:KamousiCl11,ref:KamousiSt11,ref:SuriOn14,ref:SuriOn13}.
In this paper, we study the one-center problem on uncertain points in the plane with
respect to the rectilinear distance.

Let $\calP=\{P_1,P_2,\ldots, P_n\}$ be a set of $n$ uncertain points in the plane,
where each uncertain point $P_i\in \calP$ has $m$ possible locations
$p_{i1}, p_{i2}, \cdots, p_{im}$ and for each $1\leq j\leq m$, $p_{ij}$ is
associated with a probability $f_{ij}\geq 0$ for $P_i$ being at $p_{ij}$ (which
is independent of other locations).  

For any (deterministic) point $p$ in the plane, we use $x_p$ and $y_p$ to denote
the $x$- and $y$-coordinates of $p$, respectively. For any two points $p$ and
$q$, we use $d(p,q)$ to denote the {\em rectilinear distance} between $p$ and
$q$, i.e., $d(p,q)=|x_p-x_q|+|y_p-y_q|$.

Consider a point $q$ in the plane. For any uncertain point $P_i\in \calP$, the
{\em expected rectilinear distance} between $q$ and $P_i$ is defined as
$$\Ed(P_i, q)=\sum_{j=1}^{m}f_{ij}\cdot d(p_{ij},q).$$

Let $\Ed_{\max}(q)=\max_{P_i\in \calP}\Ed(P_i,q)$. A point $q^*$ is called a {\em
rectilinear center} of $\calP$ if it minimizes the value $\Ed_{\max}(q^*)$ among all
points in the plane.  Our goal is to compute $q^*$. Note that such a point $q^*$ may not be unique, in which case we let $q^*$ denote an arbitrary such point.

We assume that for each uncertain point $P_i$ of $\calP$,
its $m$ locations are given in two sorted lists,
one by $x$-coordinates and the other by $y$-coordinates.
To the best of our knowledge, this problem has not been studied before. In this
paper, we present an $O(mn)$ time algorithm. Since the input
size of the problem is $\Theta(nm)$, our algorithm essentially
runs in linear time, which is optimal.

Further, our algorithm is applicable to the weighted version of this problem
in which each $P_i\in \calP$ has a weight $w_i\geq 0$ and the
{\em weighted expected distance}, i.e., $w_i\cdot \Ed(P_i,q)$, is
considered. To solve the weighted version, we can first
reduce it to the unweighted version by changing each $f_{ij}$ to
$w_i\cdot f_{ij}$ for all $1\leq i\leq n$ and $1\leq j\leq m$, and
then apply our algorithm for the unweighted version. The running
time is still $O(mn)$.

\subsection{Related Work}
\label{sec:related}


The problem of finding one-center among uncertain points on a line has
been considered in our previous work
\cite{ref:WangOn15}, where an $O(nm)$ time algorithm was given.
An algorithm for computing $k$ centers for
general $k$ was also given in \cite{ref:WangOn15} with the running time
$O(mn\log mn+n\log n\log k)$. In fact, in \cite{ref:WangOn15}
we considered the $k$-center problem under a
more general uncertain model where each uncertain point can appear in
$m$ intervals.  We also studied the one-center problem for uncertain points on tree
networks in \cite{ref:WangCo15}, where a linear-time algorithm was
proposed.


There is also a lot of other work on facility location problems for uncertain data.
For instances, Cormode and McGregor \cite{ref:CormodeAp08} proved that the $k$-center problem
on uncertain points each associated with multiple locations in high-dimension space
is NP-hard and gave approximation algorithms for different problem models.
Foul \cite{ref:FoulA106} considered the Euclidean one-center problem on uncertain points
each of which has a uniform distribution in a given rectangle in the plane.
de Berg. et al. \cite{ref:BergKi13} studied the Euclidean 2-center problem
for a set of moving points in the plane (the moving points can be considered uncertain).

The $k$-center problems on deterministic points
are classical problems and have been studied extensively.
When all points are in the plane, the problems on most distance metrics
are NP-hard \cite{ref:MegiddoOn84}. However,
some special cases can be solved in polynomial time,
e.g., the one-center problem \cite{ref:MegiddoLi83},
the two-center problem \cite{ref:ChanMo99},
the rectilinear three-center problem \cite{ref:HoffmannAs05},
the line-constrained $k$-center problems
(where all centers are restricted to be on a given line in the plane) \cite{ref:BrassTh11,ref:KarmakarSo13,ref:WangLi14}.

\subsection{Our Techniques}
\label{sec:techniques}

Consider any uncertain point $P_i\in \calP$ and any (deterministic) point $q$ in the
plane $\bbR^2$. We first show that $\Ed(P_i,q)$ is a convex piecewise linear
function with respect to $q\in \bbR^2$.
More specifically, if we extend a horizontal line and a vertical line from
each location of $P_i$, these lines partition the plane into
a grid $G_i$ of $(m+1)\times(m+1)$ cells. Then,
$\Ed(P_i,q)$ is a linear function (in both the $x$-
and $y$-coordinates of $q$) in each cell of $G_i$. In other words,
$\Ed(P_i,q)$ defines a plane surface patch in 3D on each cell of $G_i$.
Then, finding $q^*\in \bbR^2$ is equivalent to
finding a lowest point $p^*$ in the upper envelope of the $n$ graphs in 3D
defined by $\Ed(P_i,q)$ for all $P_i\in \calP$ (specifically, $q^*$ is the
projection of $p^*$ onto the $xy$-plane).

The problem of finding $p^*$, which may be interesting in its own right,
can be solved in $O(nm^2)$ time by the linear-time algorithm for the 3D linear
programming (LP) problem \cite{ref:MegiddoLi83}. Indeed, for a plane surface
patch, we call the plane containing it the {\em supporting plane}.
Let $\calH$ be the set of the supporting planes of the surface patches of the
functions $\Ed(P_i,q)$ for all $P_i\in \calP$.
Since each function $\Ed(P_i,q)$ is convex, $p^*$ is also a lowest
point in the upper envelope of the planes of $\calH$. Thus, finding $p^*$ is a
LP problem in $\bbR^3$ and can be solved in $O(|\calH|)$ time
\cite{ref:MegiddoLi83}.
Note that $|\calH|=\Theta(nm^2)$ since each grid $G_i$ has $(m+1)^2$ cells.

We give an $O(mn)$ time algorithm without computing
the functions $\Ed(P_i,q)$ explicitly. We use a
prune-and-search technique that can be considered as an extension of
Megiddo's technique for the 3D LP problem \cite{ref:MegiddoLi83}.
In each recursive step, we prune at
least $n/32$ uncertain points from $\calP$ in linear time.
In this way, $q^*$ can be found after $O(\log n)$ recursive steps.


Unlike Megiddo's algorithm \cite{ref:MegiddoLi83}, each recursive step
of our algorithm itself is a recursive algorithm of $O(\log m)$ recursive steps.
Therefore, our algorithm has $O(\log n)$ ``outer'' recursive steps and
each outer recursive step has $O(\log m)$ ``inner'' recursive steps.
In each outer recursive step, we maintain a rectangle $R$ that always
contains $q^*$ in the $xy$-plane. Initially,
$R$ is the entire plane. Each inner recursive step shrinks $R$ with
the help of a {\em decision algorithm}. The key idea is that after $O(\log m)$
steps, $R$ is so small that there is a set $\calP^*$ of at least $n/2$ uncertain points
such that $R$ is contained inside a single cell of the grid $G_i$ of each
uncertain point $P_i$ of $\calP^*$ (i.e., $R$ does not intersect the
extension lines from the locations of $P_i$). At this point, with the
help of our decision algorithm, we can use a pruning procedure similar
to Megiddo's algorithm \cite{ref:MegiddoLi83} to prune at least
$|\calP^*|/16\geq n/32$ uncertain points of $\calP^*$. Each outer
recursive step is carefully implemented so that it takes only linear time.

In particular, our decision algorithm is for the following {\em decision
problem}. Let $R$ be a rectangle in the plane and $R$ contains $q^*$
(but the exact location of $q^*$ is unknown).
Given an arbitrary line $l$ that intersects
$R$, the decision problem is to determine which side of $l$
contains $q^*$. Megiddo's technique \cite{ref:MegiddoLi83} gave an
algorithm that can solve our decision problem in $O(m^2 n)$
time. We give a decision algorithm of $O(mn)$ time.
In fact, in order to achieve the overall $O(mn)$ time for computing
$q^*$, our decision algorithm has the following performance.
For each $1\leq i\leq n$, let $a_i$ and $b_i$ be
the number of columns and rows of the grid $G_i$ intersecting $R$,
respectively.
Our decision algorithm runs in  $O(\sum_{i=1}^n (a_i+b_i))$ time.

The rest of the paper is organized as follows. In Section
\ref{sec:obser}, we introduce some observations.
In Section \ref{sec:decision}, we present our decision
algorithm. Section \ref{sec:algo} gives the overall algorithm for
computing the rectilinear center $q^*$. Section \ref{sec:conclude} concludes.

\section{Observations}
\label{sec:obser}
Let $p$ be a point in the plane $\bbR^2$. The vertical line and the horizontal line through $p$ partition the plane into four (unbounded) rectangles.
Consider another point $q\in \bbR^2$. We consider $d(p,q)$ as a function of
$q\in \bbR^2$. For each of the above rectangle $R$, $d(p,q)$ on $q\in R$
is a linear function in both the $x$- and $y$-coordinates of $q$, and thus
$d(p,q)$ on $q\in R$ defines a plane surface patch in $\bbR^3$.
Further, $d(p,q)$ on $q\in \bbR^2$ is a convex piecewise linear
function.

For ease of exposition, we make a general position assumption that no two
locations of the uncertain points of $\calP$ have the same $x$- or
$y$-coordinate.

Consider an uncertain point $P_i$ of $\calP$. We extend a horizontal
line and a vertical line through each location of $P_i$ to obtain a
grid, denoted by $G_i$, which has $(m+1)\times(m+1)$ cells (and each cell is a
rectangle).
According to the above discussion, for each location $p_{ij}$ of $\calP$, the function
$d(p_{ij},q)$ of $q$ in each cell of $G_i$ is linear and
defines a plane surface patch in $\bbR^3$. Therefore, if we consider
$\Ed(P_i,q)$ as a function of $q$, since $\Ed(P_i,q)$ is the sum of
$f_{ij}\cdot d(p_{ij},q)$ for all $1\leq j\leq m$, $\Ed(P_i,q)$ of $q$
in each cell of $G_i$ is also linear and defines a plane surface patch in $\bbR^3$.
Further, since each $d(p_{ij},q)$ for $q\in \bbR^2$ is
convex, the function $\Ed(P_i,q)$, as the sum of convex functions,
is also convex.

In the following, since $\Ed(P_i,q)$ is normally considered as function of $q$,
for convenience, we will use $\Ed_i(x,y)$ to denote it for
$q=(x,y)\in \bbR^2$.

The above discussion leads to the following observation.

\begin{observation}\label{obser:10}
For each uncertain point $P_i\in \calP$,
the function $\Ed_i(x,y)$ is convex piecewise linear.
More specifically,  $\Ed_i(x,y)$ on each cell of the grid $G_i$
is linear and defines a plane surface patch in $\bbR^3$ (e.g.,
see Fig. ~\ref{fig:Edfunction}).
\end{observation}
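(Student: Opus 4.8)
The plan is to establish Observation \ref{obser:10} by building up from the single-location distance function to the full expected-distance function, exploiting the fact that sums of convex piecewise-linear functions are convex piecewise linear, and that the common refinement of the relevant grids is exactly $G_i$. First I would fix an uncertain point $P_i$ and recall the fact already established in this section: for any single location $p_{ij}$, the vertical and horizontal lines through $p_{ij}$ split the plane into four rectangles, and $d(p_{ij},q)$ restricted to each such rectangle is linear in $(x,y)$ (since on each rectangle the signs of $x_q-x_{p_{ij}}$ and $y_q-y_{p_{ij}}$ are fixed, so the absolute values open up to a linear expression), while $d(p_{ij},q)$ on all of $\mathbb{R}^2$ is convex. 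Thus each $d(p_{ij},\cdot)$ is convex piecewise linear, with its ``breaklines'' being precisely the horizontal and vertical lines through $p_{ij}$.

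Next I would observe that the grid $G_i$ is, by construction, the arrangement obtained by overlaying all $m$ horizontal lines and all $m$ vertical lines through the locations $p_{i1},\dots,p_{im}$ of $P_i$; equivalently, $G_i$ is the common refinement of the four-rectangle partitions associated with the individual locations. Consequently, within any single cell $C$ of $G_i$, for every $j$ the point $q=(x,y)$ stays inside one fixed rectangle of the partition induced by $p_{ij}$, so each $d(p_{ij},q)$ is linear on $C$. Since $\mathrm{Ed}_i(x,y)=\sum_{j=1}^m f_{ij}\cdot d(p_{ij},q)$ is a nonnegative linear combination of these functions (the $f_{ij}\geq 0$ are constants), it is a sum of functions each linear on $C$, hence itself linear on $C$; that is, $\mathrm{Ed}_i$ defines a single plane surface patch in $\mathbb{R}^3$ over each cell of $G_i$.

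For the convexity claim, I would invoke that a nonnegative linear combination of convex functions is convex: each $d(p_{ij},\cdot)$ is convex on $\mathbb{R}^2$, each coefficient $f_{ij}$ is nonnegative, so $\mathrm{Ed}_i$ is convex on all of $\mathbb{R}^2$. Combining convexity with the cell-wise linearity and the fact that the pieces fit together continuously across shared cell boundaries (each $d(p_{ij},\cdot)$ is continuous, so the finite sum is continuous) yields that $\mathrm{Ed}_i$ is convex piecewise linear, which is exactly the statement.

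I do not expect a genuine obstacle here, since the result is essentially a bookkeeping consequence of properties already recorded at the start of this section; the only point requiring mild care is verifying that $G_i$ is fine enough that \emph{all} of the individual distance functions are simultaneously linear on each cell. This follows because $G_i$ refines every individual four-rectangle partition, so no breakline of any $d(p_{ij},\cdot)$ can pass through the interior of a cell of $G_i$. Once this refinement observation is made explicit, the linearity-per-cell and the global convexity both follow immediately, and nothing beyond the elementary closure properties of convex piecewise-linear functions is needed.
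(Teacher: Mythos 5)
Your proposal is correct and follows essentially the same route as the paper: the paper likewise notes that each single-location distance $d(p_{ij},q)$ is convex and linear on the four rectangles determined by the horizontal and vertical lines through $p_{ij}$, that $G_i$ refines all of these partitions so every $d(p_{ij},\cdot)$ is linear on each cell, and that $\Ed_i$ inherits cell-wise linearity and global convexity as a nonnegative combination. Your explicit remarks on the common refinement and continuity are just slightly more detailed bookkeeping of the same argument.
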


Consider the function $\Ed_i(x,y)$ of any $P_i\in \calP$. Clearly, the
complexity of $\Ed_i(x,y)$ is $\Theta(m^2)$. However, since $\Ed_i(x,y)$ on each
cell $C$ of $G_i$ is a plane surface patch in $\bbR^3$, $\Ed_i(x,y)$ on $C$ is
of constant complexity. We use $\Ed_i(x,y,C)$ to denote the linear function of
$\Ed_i(x,y)$ on $C$. Note that $\Ed_i(x,y,C)$ is also the function of the
supporting plane of the surface patch of $\Ed_i(x,y)$ on $C$.

As discussed in Section \ref{sec:techniques}, our algorithm will not
compute the function $\Ed_i(x,y)$ explicitly. Instead, we will
compute it implicitly. More specifically, we will do some
preprocessing such that given any cell $C$ of $G_i$, the function $\Ed_i(x,y,C)$
can be determined efficiently. We first introduce some notation.

\begin{figure}[t]
\begin{minipage}[t]{\linewidth}
\begin{center}
\includegraphics[totalheight=2.0in]{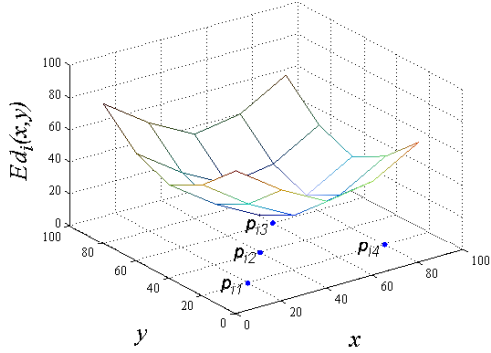}
\caption{\footnotesize Illustrating the function $\Ed_i(x,y)$ of an uncertain point $P_i$ with $m=4$.}
\label{fig:Edfunction}
\end{center}
\end{minipage}
\vspace*{-0.15in}
\end{figure}

Let $X_i=\{x_{i1}, x_{i2}, \cdots, x_{im}\}$ be the set of the
$x$-coordinates of all locations of $P_i$ sorted in ascending order.
Let $Y_i=\{y_{i1}, y_{i2}, \cdots, y_{im}\}$ be the set of
their $y$-coordinates in ascending order.
Note that $X_i$ and $Y_i$ can be obtained in $O(m)$ time
from the input (recall that the locations of $P_i$ are given in two
sorted lists in the input).
For convenience of discussion, we let $x_{i0}=-\infty$,
and let $X_i$ also include $x_{i0}$. Similarly, let
$y_{i0}=-\infty$, and let $Y_i$ also include $y_{i0}$. Note that due
to our general position assumption, the values in $X_i$ (resp., $Y_i$)
are distinct.

For any value $z$, we refer to the largest value in $X_i$ that is
smaller or equal to $z$ the {\em predecessor} of $z$ in $X_i$, and we
use $I_z(X_i)$ to denote the index of the predecessor. Similarly,
$I_z(Y_i)$ is the index of the predecessor of $z$ in $Y_i$.

Consider any point $q$ in the plane. The predecessor of the $x$-coordinate
of $q$ in $X_i$ is also called the {\em predecessor of $q$} in $X_i$. Similarly, the predecessor of the $y$-coordinate of $q$ in $Y_i$ is also called the {\em predecessor of $q$} in $Y_i$.
We use $I_q(X_i)$ and $I_q(Y_i)$ to denote their indices, respectively.

Consider any cell $C$ of the grid $G_i$. For convenience of discussion, we assume $C$
contains its left and bottom sides, but does not contain its top and right
sides. In this way, any point in the plane is contained in one and only one cell
of $G_i$. Further, all points of $C$ have the same predecessor in $X_i$ and also
have the same predecessor in $Y_i$. This allows us to define the {\em predecessor
of $C$} in $X_i$ as the predecessor of any point in $X_i$, and we use $I_C(X_i)$
to denote the index of the predecessor. We define $I_C(Y_i)$ similarly.
We have the following lemma.

\begin{lemma}\label{lem:10}
For any uncertain point $P_i\in \calP$, after $O(m)$ time preprocessing, for any
cell $C$ of the grid $G_i$, if $I_C(X_i)$ and $I_C(Y_i)$ are known, then
the function $\Ed_i(x,y,C)$ can be computed in constant time.
\end{lemma}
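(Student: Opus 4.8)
The plan is to make the grid structure of $G_i$ explicit in coordinates and then show that the linear function $\Ed_i(x,y,C)$ on a cell $C$ depends only on the two predecessor indices $I_C(X_i)$ and $I_C(Y_i)$, together with some quantities that can be precomputed in $O(m)$ time. First I would fix the cell $C$ and let $a=I_C(X_i)$ and $b=I_C(Y_i)$. For any point $q=(x,y)$ inside $C$, the key observation is that the sign of $x_{ij}-x$ is determined by whether $j\le a$ or $j>a$ (likewise for the $y$-coordinates with $b$), because $C$ lies strictly between consecutive grid lines. Concretely, for a location $p_{ij}=(x_{ij},y_{ij})$ we have $|x_{ij}-x|=x-x_{ij}$ when $j$'s rank in $X_i$ is at most $a$, and $|x_{ij}-x|=x_{ij}-x$ otherwise; the analogous statement holds for $|y_{ij}-y|$ using $b$. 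Since $\Ed_i(x,y)=\sum_{j=1}^m f_{ij}\,(|x_{ij}-x|+|y_{ij}-y|)$, expanding each absolute value according to these signs turns $\Ed_i(x,y,C)$ into a linear function $\alpha\, x+\beta\, y+\gamma$ on $C$.

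The next step is to read off the coefficients. Collecting terms, the coefficient of $x$ is $\alpha=\sum_{j:\,\mathrm{rank}(x_{ij})\le a} f_{ij}-\sum_{j:\,\mathrm{rank}(x_{ij})>a} f_{ij}$, i.e. the difference between the total probability mass to the left of $C$ and the mass to the right; symmetrically $\beta$ is the mass below minus the mass above, indexed by $b$. The constant term $\gamma$ gathers the remaining pieces $\sum_j f_{ij}(\pm x_{ij}\pm y_{ij})$ with the signs again dictated by $a$ and $b$. The crucial point is that each of these three coefficients is a partial sum of one of a small number of sequences that are indexed over the sorted order of the locations. So I would precompute, in $O(m)$ time, the prefix sums of $f_{ij}$, of $f_{ij}\,x_{ij}$ (in $X_i$-sorted order), and of $f_{ij}\,y_{ij}$ (in $Y_i$-sorted order), along with the corresponding total sums. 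Given $a$ and $b$, each of $\alpha,\beta,\gamma$ is then a fixed combination of a prefix sum and its complementary suffix sum, retrievable by a constant number of array lookups and arithmetic operations.

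For the preprocessing I would explicitly build four prefix-sum arrays over the indices $0,1,\dots,m$: the prefix sums of the probabilities in the $X_i$-order and in the $Y_i$-order, and the prefix sums of $f_{ij}x_{ij}$ and $f_{ij}y_{ij}$ in their respective sorted orders. Because $X_i$ and $Y_i$ are already available sorted from the input in $O(m)$ time, and each prefix-sum array is built by a single linear scan, the whole preprocessing is $O(m)$, as claimed. The evaluation of $\Ed_i(x,y,C)$ at query time, given $a=I_C(X_i)$ and $b=I_C(Y_i)$, then reduces to looking up the $a$-th and $b$-th entries of these arrays and combining them in $O(1)$ time.

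The main thing requiring care, rather than a deep obstacle, is the bookkeeping of the signs and the handling of the sentinel indices $x_{i0}=y_{i0}=-\infty$ and the half-open convention on cells, so that every location is counted on exactly the correct side of $C$ for both coordinates; once the sign pattern is pinned down the linearity and the $O(1)$ evaluation follow immediately. I would verify the sign convention on a single boundary case (a cell adjacent to a grid line) to confirm that the predecessor index $a$ correctly splits the locations into the ``$\le x$'' and ``$>x$'' groups, and similarly for $b$, which then certifies the formulas for $\alpha$, $\beta$, and $\gamma$ in general.
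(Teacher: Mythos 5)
Your proposal is correct and follows essentially the same route as the paper's proof: split each absolute value $|x_{ij}-x|$ (resp.\ $|y_{ij}-y|$) according to whether the location lies at or before the predecessor index, which turns $\Ed_i(x,y,C)$ into a linear function whose coefficients are prefix/suffix sums of $f_{ij}$ and $f_{ij}x_{ij}$ (resp.\ $f_{ij}y_{ij}$) in sorted order, precomputable in $O(m)$ time and retrievable in $O(1)$ per query. The paper's Equation~(1) is exactly your coefficient computation written out explicitly, so there is no substantive difference.
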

\begin{proof}
For each location $p\in P_i$, let $x_p$ and $y_q$ be the $x$- and $y$-coordinates of $p$, respectively, and let $f_p$ be the probability associated with $p$.

For any point $q=(x,y)$ in $\bbR^2$, recall that the expected distance
function $\Ed_i(x,y)=\sum_{p\in P_i}f_p\cdot d(p,q)=\sum_{p\in
P_i}f_p\cdot (|x_p-x|+|y_p-y|)$. Therefore, we can write
$\Ed_i(x,y)=\sum_{p\in P_i}f_p\cdot |x_p-x|+\sum_{p\in P_i}f_p\cdot
|y_p-y|$. In the following, we first discuss how to compute
$\sum_{p\in P_i}f_p\cdot |x_p-x|$ and the case for $\sum_{p\in
P_i}f_p\cdot |y_p-y|$ is very similar.

Let $S_1$ denote the set of all locations of $P_i$ whose
$x$-coordinates are smaller than or equal to $x$, i.e., the
$x$-coordinate of $q$. Let
$S_2=P_i\setminus S_1$. Then, we have the following:
\begin{equation}
\label{equ:10}
\begin{split}
\sum_{p\in P_i}f_p\cdot |x_p-x| &= \sum_{p\in S_1}f_p\cdot (x-x_p) + \sum_{p\in S_2}f_p\cdot (x_p-x) \\
& = x\cdot \big(\sum_{p\in S_1}f_p-\sum_{p\in S_2}f_p \big) - \sum_{p\in S_1}f_p\cdot x_p + \sum_{p\in S_2}f_p\cdot x_p\\
& = x\cdot \big(2\cdot \sum_{p\in S_1}f_p-\sum_{p\in P_i}f_p\big) - 2\sum_{p\in S_1}f_p\cdot x_p + \sum_{p\in P_i}f_p\cdot x_p.
\end{split}
\end{equation}

Thus, in order to compute $\sum_{p\in P_i}f_p\cdot |x_p-x|$, it is sufficient to know the four values $\sum_{p\in S_1}f_p$, $\sum_{p\in P_i}f_p$, $\sum_{p\in S_1}f_p\cdot x_p$, and $\sum_{p\in P_i}f_p\cdot x_p$. To this end, we do the following preprocessing.

First, we compute $\sum_{p\in P_i}f_p$ and $\sum_{p\in P_i}f_p\cdot x_p$, which can be done in $O(m)$ time. Second, recall that $X_i=\{x_{i0},x_{i1},\ldots, x_{im}\}$ maintains the $x$-coordinates of the locations of $P_i$ sorted in ascending order. Note that given any index $j$ with $1\leq j\leq m$, we can access the information of the location of $P_i$ whose $x$-coordinate is $x_{ij}$ in constant time, and this can be done by linking each $x_{ij}$ to the corresponding location of $P_i$ when we create the list $X_i$ from the input. For each $j$ with $1\leq j\leq n$, we let $f(j)$ be the probability associated with the location of $P_i$ whose $x$-coordinate is $x_{ij}$.

In the preprocessing, we compute two arrays $A[0\cdots m]$ and $B[0\cdots m]$. Specifically, for each $1\leq j\leq n$, $A[j]=\sum_{k=1}^j f(k)$ and $B[j]=\sum_{k=1}^jf(k)\cdot x_{ik}$. For $j=0$, we let $A[0]=B[0]=0$.
As discussed above, since  we can access $f(j)$ in constant time for any $1\leq j\leq m$, the two arrays $A$ and $B$ can be computed in $O(m)$ time.

Let $t=I_q(X_i)$, i.e., the index of the predecessor of $q$ in $X_i$. Note that $t\in [0,m]$. To compute $\sum_{p\in P_i}f_p\cdot |x_p-x|$, an easy observation is that $\sum_{p\in S_1}f_p$ is exactly equal to $A[t]$ and $\sum_{p\in S_1}f_p\cdot x_p$ is exactly equal to $B[t]$. Therefore, with the above preprocessing, if $t$ is known, according to Equation \eqref{equ:10}, $\sum_{p\in P_i}f_p\cdot |x_p-x|$ can be computed in $O(1)$ time.

The above shows that with $O(m)$ time preprocessing, given $I_q(X_i)$, we can compute the function $\sum_{p\in P_i}f_p\cdot |x_p-x|$ of $x$ at $q=(x,y)$ in constant time.

In a similar way, with $O(m)$ time preprocessing, given $I_q(Y_i)$, we can compute the function $\sum_{p\in P_i}f_p\cdot |y_p-y|$ of $y$ at $p=(x,y)$ in constant time.

Let $q$ be any point in the cell $C$. Hence, $I_q(X_i)=I_C(X_i)$ and $I_q(Y_i)=I_C(Y_i)$. Further, the function $\Ed_i(x,y)$ on $q=(x,y)\in C$ is exactly the function $\Ed_i(x,y,C)$.
Therefore, with $O(m)$ time preprocessing, given $I_C(X_i)$ and $I_C(Y_i)$, we can compute the function $\Ed_i(x,y,C)$ in constant time.

The lemma thus follows.
\qed
\end{proof}


Due to Lemma \ref{lem:10}, we have the following corollary.

\begin{corollary}\label{coro:10}
For each uncertain point $P_i\in \calP$,
after $O(m)$ time preprocessing, given any point $q$ in the plane,
the expected distance $\Ed(P_i,q)$ can be computed in $O(\log m)$ time.
\end{corollary}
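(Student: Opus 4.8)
The plan is to reduce the evaluation of $\Ed(P_i,q)$ to locating the cell $C$ of the grid $G_i$ that contains $q$, and then to invoke Lemma \ref{lem:10}. First I would run the $O(m)$ time preprocessing of Lemma \ref{lem:10}; recall that this produces, among other things, the sorted lists $X_i$ and $Y_i$ (each of size $m+1$) together with the prefix-sum arrays needed to recover $\Ed_i(x,y,C)$ in constant time once the index of the predecessor in each list is known.

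Given a query point $q=(x,y)$, the key step is to determine the two predecessor indices $I_q(X_i)$ and $I_q(Y_i)$, i.e., the index of the largest element of $X_i$ not exceeding $x$ and the index of the largest element of $Y_i$ not exceeding $y$. Since $X_i$ and $Y_i$ are maintained as sorted arrays, each of these indices can be found by a single binary search in $O(\log m)$ time. By the definition of the grid $G_i$, the cell $C$ containing $q$ is precisely the one satisfying $I_C(X_i)=I_q(X_i)$ and $I_C(Y_i)=I_q(Y_i)$, so these two searches identify $C$.

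Once $I_C(X_i)$ and $I_C(Y_i)$ are in hand, Lemma \ref{lem:10} returns the linear function $\Ed_i(x,y,C)$ in $O(1)$ time, and evaluating this linear function at $(x,y)$ yields $\Ed(P_i,q)$ in $O(1)$ additional time. Summing the costs, the preprocessing is $O(m)$ and each query is $O(\log m)$, dominated by the two binary searches, which establishes the corollary.

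I do not anticipate a genuine obstacle here: the substantive work—converting knowledge of the containing cell into the value of $\Ed(P_i,q)$ in constant time—has already been carried out in Lemma \ref{lem:10}, so the corollary follows almost immediately, with the only new ingredient being the predecessor searches in $X_i$ and $Y_i$.
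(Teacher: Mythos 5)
Your proposal is correct and follows essentially the same route as the paper: two binary searches on $X_i$ and $Y_i$ to obtain $I_q(X_i)$ and $I_q(Y_i)$ in $O(\log m)$ time, which identify the cell $C$ containing $q$, followed by a constant-time application of Lemma \ref{lem:10} and a constant-time evaluation of the resulting linear function. No gaps.
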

\begin{proof}
Given any point $q\in \bbR^2$, we can compute $I_q(X_i)$ in $O(\log m)$ time by doing
binary search on $X_i$. Similarly, we can compute $I_q(Y_i)$ in $O(\log m)$ time.
Let $C$ be the cell containing $q$. Recall that $I_C(X_i)=I_q(X_i)$ and
$I_C(Y_i)=I_q(Y_i)$. Hence, by Lemma \ref{lem:10}, we can compute the function
$\Ed_i(x,y,C)$ in constant time. Then, $\Ed(P_i,q)$ is equal to
$\Ed_i(q_x,q_y,C)$, where $q_x$ and $q_y$ are the $x$- and $y$-coordinates of
$q$, respectively. Thus, after $\Ed_i(x,y,C)$ is known, $\Ed(P_i,q)$ can be
computed in constant time.  The corollary thus follows. \qed
\end{proof}

Recall that $\Ed_{\max}(q)=\max_{P_i\in \calP}\Ed(P_i,q)$ for any point $q$ in
the plane. For convenience,
we use $\Ed_{\max}(x,y)$ to represent $\Ed_{\max}(q)$ as a function of $q=(x,y)\in
\bbR^2$. Note that $\Ed_{\max}(x,y)$  is the upper envelope of the functions
$\Ed_i(x,y)$ for all $i=1,2,\ldots, n$. Since each $\Ed_i(x,y)$ is convex on
$\bbR^2$, $\Ed_{\max}(x,y)$ is also convex on $\bbR^2$. Further, the rectilinear
center $q^*$ corresponds to a lowest point $p^*$ on $\Ed_{\max}(x,y)$.
Specifically, $q^*$ is the projection of $p^*$ on the $xy$-plane. Therefore,
computing $q^*$ is equivalent to computing a lowest point in the upper envelope
of all functions $\Ed_i(x,y)$ for all $i=1,2,\ldots, n$.

For each $1\leq i\leq n$, let $H_i$ denote the set of supporting planes of all
surface patches of the function $\Ed_i(x,y)$. Let $\calH=\cup_{i=1}^n H_i$.
Since $\Ed_i(x,y)$ is convex,
$\Ed_i(x,y)$ is essentially the upper envelope of the planes in
$H_i$. Hence, $\Ed_{max}(x,y)$ is also the upper envelope of all planes in
$\calH$. Therefore, as discussed in Section \ref{sec:techniques}, finding $p^*$
is essentially a 3D LP problem on $\calH$, which can be solved in $O(|\calH|)$
time by Megiddo's technique \cite{ref:MegiddoLi83}. Since the size of each $H_i$
is $\Theta(m^2)$, $|\calH|=\Theta(nm^2)$.
Therefore, applying the algorithm in \cite{ref:MegiddoLi83} directly can solve
the problem in $O(nm^2)$ time.
In the following, we give an $O(nm)$ time algorithm.

In the following paper, we assume we have done the preprocessing of
Lemma \ref{lem:10} for each $P_i\in \calP$, which takes $O(mn)$ time in
total.

%

\section{The Decision Algorithm}
\label{sec:decision}
In this section, we present a decision algorithm that solves a decision
problem, which is needed later in Section \ref{sec:algo}. We first introduce
the decision problem.

Let $R=[x_1,x_2;y_1,y_2]$ be an axis-parallel rectangle in the plane, where
$x_1$ and $x_2$ are the $x$-coordinates of the left and right sides of $R$, respectively,
and $y_1$ and $y_2$ are the $y$-coordinates of the bottom and top sides of $R$,
respectively.
Suppose it is known that $q^*$  is in $R$ (but the exact location of $q^*$ is
not known). Let $L$ be an arbitrary line that
intersects the interior of $R$. The {\em decision problem} asks whether $q^*$ is on
$L$, and if not, which side of $L$ contains $q^*$.
We assume the two predecessor indices $I_{x_1}(X_i)$ and $I_{y_1}(Y_i)$ are
already known.

For each $1\leq i\leq n$, let
$a_i=I_{x_2}(X_i)-I_{x_1}(X_i)+1$ and $b_i=I_{y_2}(Y_i)-I_{y_1}(Y_i)+1$.
In fact, $a_i$ and $b_i$ are the numbers of columns and rows of
$G_i$ that intersect $R$, respectively. Below, we give a {\em decision
algorithm} that solves the decision problem in $O(\sum_{i=1}^n
(a_i+b_i))$ time. Note that $2n \leq \sum_{i=1}^n (a_i+b_i)\leq 2(m+1)n$.

We first show that the decision problem can be solved in
$O(\sum_{i=1}^na_i\cdot b_i)$ time by using the decision algorithm
for the 3D LP problem \cite{ref:MegiddoLi83}. Later we
will reduce the running time to $O(\sum_{i=1}^n
(a_i+b_i))$ time.

Recall that $p^*$ is a lowest point in the upper envelope of the functions
$\Ed_i(x,y)$ for $i=1,2,\ldots,n$.
Since $q^*$ is in $R$ and each function $\Ed_i(x,y)$ is convex, an easy
observation is that $p^*$ is also a
lowest point in the upper envelope of $\Ed_i(x,y)$ for
$i=1,2,\ldots,n$ restricted on $(x,y)\in R$. This implies that we only need to
consider each function $\Ed_i(x,y)$ restricted on $R$.

For each $1\leq i\leq n$, let $G_i(R)$ be the set of cells of $G_i$ that
intersect $R$, and let $H_i(R)$ be the set of
supporting planes of the surface patches of
$\Ed(P_i,q)$ defined on the cells of $G_i(R)$. Let $\calH(R)=\cup_{i=1}^n
H_i(R)$. By our above analysis,
$p^*$ is a lowest point of the upper envelope of all planes in
$\calH(R)$. Note that $|H_i(R)|=a_i\cdot b_i$ for each $1\leq i\leq n$.
Thus, $|\calH(R)|=\sum_{i=1}^n a_i\cdot b_i$.
Then, we can apply the decision algorithm in \cite{ref:MegiddoLi83}
(Section 5.2)
on $\calH(R)$ to determine which side of $L$ contains $q^*$ in
$O(|\calH(R)|)$ time. In order to explain our improved algorithm later, we
sketch this algorithm below.

We consider each plane of $\calH(R)$ as a function of the points $q$ on the
$xy$-plane $\bbR^2$. In the first step, the algorithm finds a point $q'$ on $L$ that minimizes
the maximum value of all functions in $\calH(R)$ restricted on the line $q\in
L$. This is essentially a 2D LP problem because each function of
$\calH(R)$ restricted on $L$ is a line, and thus the
problem can be solved in $O(|\calH(R)|)$ time \cite{ref:MegiddoLi83}.
Let $\Phi_{q'}$ be the set of
functions of $\calH(R)$ whose values at $q'$ are equal to the above maximum value.
The set $\Phi_{q'}$ can be found in $O(|\calH(R)|)$ time after $q'$ is
computed. This
finishes the first step, which takes $O(|\calH(R)|)=O(\sum_{i=1}^na_i\cdot b_i)$ time.

The second step solves another two instances of the 2D LP problem on the
planes of $\Phi_{q'}$, which takes $O(|\Phi_{q'}|)$ time. An easy
upper bound for $|\Phi_{q'}|$ is $\sum_{i=1}^na_i\cdot b_i$.
A close analysis can show that $|\Phi_{q'}|=O(n)$.
Indeed, for each $1\leq i\leq n$, since the function $\Ed_i(x,y)$ is convex, among all
$a_i\cdot b_i$ planes in $H_i(R)$, at most four of them are in $\Phi_{q'}$. Therefore,
$|\Phi_{q'}|=O(n)$. Hence, the second step runs in $O(n)$ time.
Since in our problem there always exists a solution,
according to \cite{ref:MegiddoLi83}, the second step will either conclude
that $q'$ is $q^*$ or tell which side of $L$ contains $q^*$, which
solves the decision problem. The algorithm takes
$O(\sum_{i=1}^na_i\cdot b_i)$ time in total, which is
dominated by the first step.

In the sequel, we reduce the running time of the above algorithm, in
particular, the first step, to $O(\sum_{i=1}^n(a_i+ b_i))$.
Our goal is to compute $q'$ and $\Phi_{q'}$.
By the definition, $q'$ is a
lowest point in the upper envelope of all functions of $\calH(R)$
restricted on the line $L$. Consider any uncertain point $P_i\in \calP$.
Let $H_i(R,L)$ be the set of supporting planes of the
surface patches defined on the cells of $G_i(R)$ intersecting $L$.
Observe that since $\Ed_i(x,y)$ is convex, the upper envelope of all the
functions of $H_i(R)$ restricted on $L$ is exactly the upper envelope of
the functions of $H_i(R,L)$ restricted on $L$. Therefore, $q'$ is also a lowest
point in the upper envelope of the functions of $\calH(R,L)$ restricted on $L$,
where $\calH(R,L)=\cup_{i=1}^nH_i(R,L)$.  In other words, among all planes in
$\calH(R)$, only the planes of $\calH(R,L)$ are relevant for determining $q'$.
Thus, suppose $\calH(R,L)$ has been computed; then
$q'$ can be computed based on the planes of $\calH(R,L)$ in $O(|\calH(R,L)|)$
time by the 2D LP algorithm \cite{ref:MegiddoLi83}.
After $q'$ is computed, the set $\Phi_{q'}$ can also be determined in $O(|\calH(R,L)|)$
time.

Note that $|\calH(R,L)|=O(\sum_{i=1}^n(a_i+b_i))$, since
for each $1\leq i\leq n$, $|H_i(R,L)|$, which
is equal to the number of cells of $G_i(R)$ intersecting
$L$, is $O(a_i+b_i)$.

It remains to compute $\calH(R,L)$, i.e., compute
$H_i(R,L)$ for each $1\leq i\leq n$. Recall that $R=[x_1,x_2;y_1,y_2]$ and the
two predecessor indices $I_{x_1}(X_i)$ and $I_{y_1}(Y_i)$ for each $1\leq i\leq
n$ are already known.  The following lemma gives an
$O(a_i+b_i)$ algorithm to compute $H_i(R,L)$.

\begin{lemma}\label{lem:20}
For each $1\leq i\leq n$, $H_i(R,L)$ can be computed in
$O(a_i+b_i)$ time.
\end{lemma}
\begin{proof}
Let $G_i(R,L)$ be the set of cells of $G_i(R)$ intersecting $L$.
To compute the planes in $H_i(R,L)$, it is sufficient to determine the plane
surface patches of $\Ed_i(x,y)$ defined on the cells of $G_i(R,L)$.
By Lemma \ref{lem:10}, this amounts to determine the indices of the
predecessors of these cells in $X_i$ and $Y_i$, respectively.
In the following, we give an algorithm to compute the cells of $G_i(R,L)$ and
determine their predecessor indices in $X_i$ and $Y_i$, respectively, and the
algorithm runs in $O(a_i+b_i)$ time.

The main idea is that we first pick a particular
point $p$ on $L\cap R$ and locate the cell of
$G_i(R)$ containing $p$ (clearly this cell belongs to $G_i(R,L)$),
and then starting from $p$, we traverse on $L$ and $G_i(R)$
simultaneously to trace other cells of $G_i(R,L)$ until we move out of $R$. The
details are given below.

We focus on the case where $L$ has a positive slope. The other cases can be
handled similarly.
Recall that $L$ intersects the interior of $R$. 
Let $p$ be the leftmost intersection of $L$ with the boundary of $R$. Hence, $p$
is either on the left side or the bottom side of $R$.

Let $C$ be the cell of $G_i$ that contains $p$. We first determine
the two indices $I_p(X_i)$ and $I_p(Y_i)$. Note that
$I_C(X_i)=I_p(X_i)$ and $I_C(Y_i)=I_p(Y_i)$.

Since $p\in R$, the index $I_p(X_i)$ can be found in $O(a_i)$ time by scanning the
list $X_i$ from the index $I_{x_1}(X_i)$.
Similarly, $I_p(Y_i)$ can be found in $O(b_i)$ time by scanning the
list $Y_i$ from the index $I_{y_1}(Y_i)$.
After $I_C(X_i)=I_p(X_i)$ and $I_C(Y_i)=I_p(Y_i)$ are computed,
by Lemma \ref{lem:10}, the function $\Ed_i(x,y,C)$ can be computed in
constant time, and we add the function to $H_i(R,L)$.

Next, we move $p$ on $L$ rightwards. We will show that when $p$ crosses the
boundary of $C$, we can determine the new cell containing $p$ and update the
two indices $I_p(X_i)$ and $I_p(Y_i)$ in constant time. This process continues
until $p$ moves out of $R$.  Specifically, when $p$ moves on $L$
rightwards, $p$ will cross the boundary of $C$ either from the top side or the
right side.

First, we determine whether $p$ will move out of $R$
before $p$ crosses the boundary of $C$. If yes, then we terminate the
algorithm. Otherwise, we determine whether $p$ moves out of $C$ from
its right side or left side. All above can be easily done in constant
time. Depending on whether $p$ crosses the boundary of $C$ from its top side,
right side, or from both sides simultaneously, there are three cases.


\begin{enumerate}
\item
If $p$ crosses the boundary of $C$ from the top side and $p$ does not
cross the right side of $C$, then $p$ enters into a new cell that is on top of $C$.
We update $C$ to the new cell. We increase the index $I_p(Y_i)$ by one, but
keep $I_p(X_i)$ unchanged. Clearly, the above two indices are
correctly updated and $I_C(X_i)=I_p(X_i)$ and $I_C(Y_i)=I_p(Y_i)$ for
the new cell $C$. Again, by Lemma \ref{lem:10}, the function
$\Ed_i(x,y,C)$ for the new cell $C$ can be
computed in constant time. We add the new function to $H_i(R,L)$.

\item
If $p$ crosses the boundary of $C$ from the right side and $p$ does not
cross the top side of $C$, then $p$ enters into a new cell that is on
right of $C$. The algorithm in this case is similar to the above case and we omit the
discussions.

\item
The remaining case is when $p$ crosses the boundary of $C$ through the
top right corner of $C$. In this case, $p$ enters into the northeast
neighboring cell of $C$. We first add to $H_i(R,L)$ the supporting planes of the
surface patches of $\Ed_i(x,y)$ defined on the top neighboring cell and the right neighboring
cell of $C$, which can be computed in constant time as
the above two cases. Then, we update $C$ to the new cell $p$ is entering.
We increase each of $I_p(X_i)$ and $I_p(Y_i)$ by one. Again, the two
indices are correctly updated for the new cell $C$.
Finally, we compute the new function $\Ed_i(x,y,C)$ and add it to $H_i(R,L)$.
\end{enumerate}

When the algorithm stops, $H_i(R,L)$ is computed.
In general, during the procedure of moving $p$ on $L$, we spend constant time on
finding each supporting plane of $H_i(R,L)$.  Therefore, the total running time
of the entire algorithm is $O(a_i+b_i)$. The lemma thus follows.
\qed
\end{proof}

With the preceding lemma, we have the following result.

\begin{theorem}\label{theo:10}
The decision problem can be solved in $O(\sum_{i=1}^n(a_i+b_i))$ time.
\end{theorem}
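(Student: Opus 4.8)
The plan is to follow the two-step structure of Megiddo's 3D LP decision procedure \cite{ref:MegiddoLi83} sketched above, but to run its first (and only bottleneck) step on the reduced plane set $\calH(R,L)$ rather than on all of $\calH(R)$. Since the preceding discussion already established that $q'$ --- the lowest point on the upper envelope restricted to $L$ --- is determined by $\calH(R,L)$ alone, and that $|\calH(R,L)|=O(\sum_{i=1}^n(a_i+b_i))$, the whole decision procedure can be made to run within the stated bound.

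Concretely, I would proceed in the following order. First, for each $1\le i\le n$, invoke Lemma \ref{lem:20} to compute $H_i(R,L)$ in $O(a_i+b_i)$ time, and take their union to obtain $\calH(R,L)$; this costs $O(\sum_{i=1}^n(a_i+b_i))$ time in total. Second, restrict every plane of $\calH(R,L)$ to the line $L$, which turns each into a line in the vertical plane through $L$, and solve the resulting 2D LP to find $q'$; by the 2D LP algorithm of \cite{ref:MegiddoLi83} this takes $O(|\calH(R,L)|)=O(\sum_{i=1}^n(a_i+b_i))$ time. Third, scan $\calH(R,L)$ once to collect $\Phi_{q'}$, the set of planes whose value at $q'$ equals $\Ed_{\max}(q')$. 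Fourth, run the second step of Megiddo's procedure --- two further 2D LP instances --- on the planes of $\Phi_{q'}$; since each $\Ed_i(x,y)$ is convex, at most four planes of $H_i(R)$ attain the maximum at $q'$, so $|\Phi_{q'}|=O(n)$ and this step takes $O(n)$ time, which is absorbed into $O(\sum_{i=1}^n(a_i+b_i))$ because $\sum_{i=1}^n(a_i+b_i)\ge 2n$. As in \cite{ref:MegiddoLi83}, this second step either certifies $q'=q^*$ or reports which side of $L$ contains $q^*$, thereby solving the decision problem.

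The step I expect to require the most care is the justification that operating on $\calH(R,L)$ instead of $\calH(R)$ yields the same $q'$ and the same $\Phi_{q'}$. For $q'$ this is exactly the convexity argument already recorded: for each $i$ the upper envelope of $H_i(R)$ restricted to $L$ coincides with that of $H_i(R,L)$ restricted to $L$, so the on-$L$ maximum is unaffected. For $\Phi_{q'}$ I must additionally argue $\Phi_{q'}\subseteq\calH(R,L)$: any plane of $\calH(R)$ attaining $\Ed_{\max}(q')$ at the point $q'$ is the supporting plane of a cell of some $G_i$ whose closure contains $q'$; since $q'$ lies on $L$, the closure of every such cell meets $L$, so the cell belongs to $G_i(R,L)$ and its supporting plane to $\calH(R,L)$. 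Hence scanning only $\calH(R,L)$ in the third step recovers $\Phi_{q'}$ in full, and the claimed time bound follows by summing the costs of the four steps.
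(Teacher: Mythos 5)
Your proposal is correct and follows essentially the same route as the paper: compute $\calH(R,L)$ via Lemma~\ref{lem:20} in $O(\sum_{i=1}^n(a_i+b_i))$ time, run the first (2D LP) step of Megiddo's decision procedure on $\calH(R,L)$ rather than $\calH(R)$ using the convexity of each $\Ed_i(x,y)$ to justify the replacement, and then run the $O(n)$-time second step on $\Phi_{q'}$. Your explicit argument that $\Phi_{q'}\subseteq\calH(R,L)$ is a small point the paper leaves implicit, but it is the same underlying convexity reasoning.
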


\section{Computing the Rectilinear Center}
\label{sec:algo}

In this section, with the help of our decision algorithm in Section
\ref{sec:decision}, we compute the rectilinear center $q^*$ in $O(mn)$ time.

As discussed in Section \ref{sec:techniques}, our algorithm is a
prune-and-search algorithm that has $O(\log n)$ ``outer''
recursive steps each of which has $O(\log m)$ ``inner'' recursive
steps. In each outer recursive step, the algorithm prunes at
least $|\calP|/32$ uncertain points of $\calP$ such that these uncertain
points are not relevant for computing $q^*$. After
$O(\log n)$ outer recursive steps, there will be only a constant number of
uncertain points remaining in $\calP$. Each outer recursive step runs in
$O(m |\calP|)$ time,
where $|\calP|$ is the number of uncertain points remaining in $\calP$.
In this way, the total running time of the
algorithm is $O(mn)$.


Each outer recursive step is another recursive
prune-and-search algorithm, which consists of $2+\log m$ inner recursive steps.
Let $\calX=\cup_{i=1}^n X_i$ and $\calY=\cup_{i=1}^n Y_i$.
Hence, $|\calX|=|\calY|=mn$. We maintain a rectangle
$R=[x_1,x_2;y_1,y_2]$ that contains $q^*$. Initially, $R$ is the entire plane. In
each inner recursive step, we shrink $R$ such that the {\em $x$-range} $[x_1,x_2]$
(resp., {\em $y$-range} $[y_1,y_2]$) of the new $R$ only contains half of the values
of $\calX$ (resp., $\calY$) in the $x$-range (resp., $y$-range) of the previous
$R$. In this way, after $\log m+2$
inner recursive steps, the $x$-range (resp., $y$-range) of $R$ only contains at most $n/4$ values of $\calX$ (resp., $\calY$). At this
moment, a {\em key observation} is that there is a subset $\calP^*$ of
at least $n/2$ uncertain points, such that for each $P_i\in \calP^*$, $R$ is contained in the interior of a cell of the grid $G_i$, i.e., the $x$-range (resp., $y$-range) of $R$ does not contain any value of $X_i$ (resp., $Y_i$). Due to the observation,
we can use a pruning procedure similar to that in
\cite{ref:MegiddoLi83} to prune at least $|\calP^*|/16\geq n/32$ uncertain
points.

In the following, in Section \ref{sec:firstprune}, we give our algorithm on pruning the values of $\calX$ and $\calY$ to obtain $\calP^*$. In Section \ref{sec:secondprune}, we prune uncertain points of $\calP^*$.

\subsection{Pruning the Coordinate Values of $\calX$ and $\calY$}
\label{sec:firstprune}


Consider a general step of the algorithm where we are about to perform the
$j$-th inner recursive step for $1\leq j\leq \log m+2$. Our algorithm maintains
the following {\em algorithm invariants}. (1) We have a rectangle
$R^{j-1}=[x_1^{j-1},x_2^{j-1};y_1^{j-1},y_2^{j-1}]$ that contains $q^*$. (2) For
each $1\leq i\leq n$, the index $I_{x_1^{j-1}}(X_i)$ of the predecessor of
$x_1^{j-1}$ in $X_i$ is known, and so is the index $I_{y_1^{j-1}}(Y_i)$. (3) We
have a sublist $X_i^{j-1}$ of $X_i$ that consists of all values of
$X_i$ in $[x_1^{j-1},x_2^{j-1}]$ and a sublist $Y_i^{j-1}$ of $Y_i$
that consists of all values of $Y_i$ in $[y_1^{j-1},y_2^{j-1}]$. Note that these
sublists can be empty. (4) $|\calX^{j-1}|\leq mn/2^{j-1}$ and
$|\calY^{j-1}|\leq mn/2^{j-1}$, where $\calX^{j-1}=\cup_{i=1}^nX_i^{j-1}$  and
$\calY^{j-1}=\cup_{i=1}^n Y_i^{j-1}$.

Initially, we set $R^0=[-\infty,+\infty;-\infty,+\infty]$, $X_i^0=X_i$ and
$Y_i^0=Y_i$ for each $1\leq i\leq n$, with $\calX^0=\calX$ and $\calY^0=\calY$. It
is easy to see that before we start the first inner recursive step for
$j=1$, all the algorithm invariants hold.

In the sequel, we give the details of the $j$-th inner recursive step. We will
show that its running time is $O(mn/2^{j}+n)$ and all
algorithm invariants are still maintained after the step.

Let $x_m$ be the median of $\calX^{j-1}$ and $y_m$ be the median of $\calY^{j-1}$.
Both $x_m$ and $y_m$ can be found in $O(|\calX^{j-1}|+|\calY^{j-1}|)$ time.

For each $1\leq i\leq n$, let $a_i^{j-1}=I(x_2^{j-1})-I(x_1^{j-1})+1$ and
$b_i^{j-1}=I(y_2^{j-1})-I(y_1^{j-1})+1$. Observe that
$a_i^{j-1}=|X_i^{j-1}|+1$ and $b_i^{j-1}=|Y_i^{j-1}|+1$.

Let $x^*$ and $y^*$ be the $x$- and $y$-coordinates of $q^*$,
respectively.

We first determine whether $x^*>x_m$, $x^*<x_m$, or $x^*=x_m$. This can
be done by applying our decision algorithm on $R^{j-1}$ and
$L$ with $L$ being the vertical line $x=x_m$. By Theorem \ref{theo:10}, the
running time of our decision algorithm is
$O(\sum_{i=1}^n(a_i^{j-1}+b_i^{j-1}))$, which is
$O(n+|\calX^{j-1}|+|\calY^{j-1}|)$.

Note that if $x^*=x_m$, then according to our decision algorithm,
$q^*$ will be found by the decision algorithm
and we can terminate the entire algorithm. Otherwise, without loss of generality, we
assume $x^* > x_m$. We proceed to determine whether $y^*>y_m$ or $y^*<y_m$,
or $y^*=y_m$ by applying our decision algorithm on
$R^{j-1}$ and $L$ with $L$ being the horizontal line $y=y_m$.
Similarly, if $y^*=y_m$, then the decision algorithm will find $q^*$
and we are done. Otherwise, without loss of generality we assume $y^*>y_m$.
The above calls our decision algorithm twice, which takes
$O(n+|\calX^{j-1}|+|\calY^{j-1}|)$ time in total.

Now we know that $q^*$ is in the rectangle
$[x_m,x_2^{j-1};y_m,y_2^{j-1}]$. We let $R^j=[x_1^j,x_2^j;y_1^j,y_2^j]$
be the above rectangle, i.e., $x_1^j=x_m$, $x_2^j=x_2^{j-1}$,
$y_1^j=y_m$, and $y_2^j=y_2^{j-1}$.
Clearly, the first algorithm invariant is maintained.

We further proceed as follows to maintain the other three invariants.

For each $1\leq i\leq n$, by scanning the sorted list $X_i^{j-1}$, we
compute the index $I_{x_1^{j}}(X_i)$ of the predecessor of $x_1^{j}$ in
$X_i$ (each element of $X_i^{j-1}$ maintains its
original index in $X_i$),
and similarly, by scanning the sorted list $Y^{j-1}_i$, we
compute the index $I_{y_1^{j}}(Y_i)$.
Computing these indices in all $X_i$ and $Y_i$ for
$i=1,2,\ldots, n$ can be done in $O(|\calX^{j-1}|+|\calY^{j-1}|)$ time.
This maintains the second algorithm invariant.

Next, for each $1\leq i\leq n$, we scan $X_i^{j-1}$ to compute a sublist $X_i^j$,
which consists of all values of $X_i^{j-1}$ in $[x_1^{j},x_2^{j}]$, and similarly,
we scan $Y_i^{j-1}$ to compute a sublist $Y_i^j$, which
consists of all values of $Y_i^{j-1}$ in $[y_1^{j},y_2^{j}]$.
Computing the lists $X_i^j$ and $Y_i^j$ for all $i=1,2,\ldots,n$ as
above can be done in overall $O(|\calX^{j-1}|+|\calY^{j-1}|)$ time.
This maintains the third algorithm invariant.

Let $\calX^j=\sum_{i=1}^n X_i^j$ and $\calY^j=\sum_{i=1}^n Y_i^j$.
According to our above algorithm,
$|\calX^j|\leq |\calX^{j-1}|/2$ and $|\calY^j|\leq |\calY^{j-1}|/2$.
Since $|\calX^{j-1}|\leq nm/2^{j-1}$ and $|\calY^{j-1}|\leq
nm/2^{j-1}$, we obtain $|\calX^{j}|\leq nm/2^{j}$ and $|\calY^{j}|\leq
nm/2^{j}$. Hence, the fourth algorithm invariant is maintained.

In summary, after the $j$-th inner recursive step, all four
algorithm invariants are maintained. Our above analysis also shows that
the total running time is $O(n+|\calX^{j-1}|+|\calY^{j-1}|)$, which is
$O(nm/2^j + n)$.

We stop the algorithm after the $t$-th inner recursive step, for $t=2+\log m$.
The total time for all $t$ steps is thus $O(\sum_{j=1}^t(n+ mn/2^j))=O(mn)$.

After the $t$-th step, by our algorithm invariants, the rectangle $R^t$ contains $q^*$, and $|\calX^t|\leq mn/2^t=n/4$ and $|\calY^t|\leq mn/2^t=n/4$.

We say that an uncertain point $P_i$ is {\em prunable} if both $X_i^t$ and $Y_i^t$ are empty (and thus $R^t$ is contained in the interior of a cell of $G_i$). Let $\calP^*$ denote the set of all prunable uncertain points of $\calP$.
The following is an easy but crucial observation.

\begin{observation}\label{obser:20}
$|\calP^*|\geq n/2$.
\end{observation}
\begin{proof}
Since $\calX^t\leq n/4$, among the $n$ sets $X_i^t$ for $i=1,2,\ldots, n$, at most $n/4$ of them are non-empty. Similarly, since $\calY^t\leq n/4$, among the $n$ sets $Y_i^t$ for $i=1,2,\ldots, n$, at most $n/4$ of them are non-empty. Therefore, there are at most $n/2$ uncertain points $P_i\in \calP$ such that either $X_i^t$ or $Y_i^t$ is non-empty. This implies that there are at least $n/2$ prunable uncertain points in $\calP$.
\qed
\end{proof}

After the $t$-th inner recursive step, the set $\calP^*$ can be obtained in $O(n)$ time by checking all sets $X_i^t$ and $Y_i^t$ for $i=1,2,\ldots, n$ and see whether they are empty.

The reason we are interested in prunable uncertain points is that for each prunable uncertain point $P_i$ of $\calP^*$, since $R^t$ contains $q^*$ and $R_t$ is contained in a cell $C_i$ of $G_i$, there is only one surface patch of $\Ed_i(x,y)$ (i.e., the one defined on $C_i$) that is relevant for computing $q^*$.
Let $h_i$ denote the supporting plane of the above surface patch. We call $h_i$ the {\em relevant plane} of $P_i$. Note that we can obtain $h_i$ in constant time.
Indeed, observe that the predecessor index $I_{C_i}(X_i)$ is exactly $I_{x_1^t}(X_i)$, which is known by our algorithm invariants. Similarly, the index $I_{C_i}(Y_i)$ is also known. By Lemma \ref{lem:10}, the function $\Ed_i(x,y,C_i)$, which is also the function of $h_i$, can be obtained in constant time. Hence, the relevant
planes of all prunable uncertain points of $\calP^*$ can be obtained in $O(n)$ time.

\paragraph{Remark.}
One may wonder why we did not perform the inner recursive
steps for $t=\log mn$ times (instead of $t=2+\log m$ time)
so that $\calX^t$ and $\calY^t$ would each have a constant number of values in
the range of $R$.
The reason is that based on our analysis, that would take $O(mn+n\log
nm)$ time, which may not be bounded by $O(mn)$ (e.g., when $m=o(\log n)$). In
fact,  performing the inner recursive steps for $t=2+\log m$ times such that
$\calX^t$ and $\calY^t$ each have at most $\frac{n}{4}$ values in the range of
$R$ is an interesting and crucial ingredient of our techniques.

\subsection{Pruning Uncertain Points from $\calP^*$}
\label{sec:secondprune}

Consider a prunable uncertain point $P_i$ of $\calP^*$. Recall that $H_i$ is the
set of supporting planes of all surface patches of $\Ed_i(x,y)$. The above
analysis shows that among all planes in $H_i$, only the relevant plane
$h_i$ is useful for determining $q^*$.
In other words, the point $p^*$, as a lowest point of all planes in $\calH=\cup_{i=1}^n
H_i$, is also a lowest point of the planes in the union of $\cup_{P_i\in
\calP^*}h_i$ and $\cup_{P_i\in \calP\setminus\calP^*} H_i$.
This will allow us to prune at least $|\calP^*|/16$ uncertain points from
$\calP^*$. The idea is similar to  Megiddo's pruning scheme
for the 3D LP algorithm in \cite{ref:MegiddoLi83}.

For each $P_i\in\calP^*$, its relevant plane $h_i$ is also considered as a function in
the $xy$-plane.  Arrange all uncertain points of $\calP^*$ into $|\calP^*|/2$
disjoint pairs. Let $D(\calP^*)$ denote the
set of all these pairs. 
For each pair $(P_i, P_j)\in D(\calP^*)$, if the value of the function
$h_i$ at any point of $R^t$
is greater than or equal to that of $h_j$, then $P_j$ can be pruned immediately;
otherwise, we project the intersection of $h_i$ and $h_j$
on the $xy$-plane to obtain a line $L_{ij}$ dividing $R^t$ into two parts,
such that $h_i\geq h_j$ on one part and $h_i\leq h_j$ on the other.

Let $\calL$ denote the set of the dividing lines $L_{ij}$ for all pairs of
$D(\calP^*)$. Let $L_m$ be
the line whose slope has the median value among the lines of
$\calL$. We transform the coordinate system by rotating the $x$-axis to be
parallel to $L_m$ (the $y$-axis does not change). For ease of discussion,
we assume no other lines of $\calL$ are parallel to $L_m$ (the assumption
can be easily lifted; see \cite{ref:MegiddoLi83}).  In the new coordinate
system, half the lines of $\calL$ have negative slopes and the other half have
positive slopes. We now arrange all lines of $\calL$ into disjoint pairs
such that each pair has a line of a negative slope and a line of
positive slope. Let $D(\calL)$ denote the set of all these line pairs.


For each pair $(L_i, L_j)\in D(\calL)$, we define $y_{ij}$ as the $y$-coordinate
of the intersection of $L_i$ and $L_j$.
We find the median $y_m$ of the values $y_{ij}$ for all pairs in $D(\calL)$.
Let $x^*$ and $y^*$ respectively be the $x$- and $y$-coordinate of $q^*$ in the new
coordinate system.  We determine  in $O(mn)$ time whether $y^*>y_m$,
$y^*<y_m$ or $y^*=y_m$ by using our decision algorithm (here an
$O(mn)$ time decision algorithm is sufficient for our purpose).
If $y^*=y_m$, then our decision algorithm will find $q^*$ and we can terminate the algorithm.
Otherwise, without of loss generality, we assume $y^*<y_m$.


Let $D'(\calL)$ denote the set of all pairs $(L_i, L_j)$ of $D(\calL)$
such that $y_{ij}\geq y_m$. Note that $|D'(\calL)|\geq |D(\calL)|/2$.
For each pair $(L_i,L_j)\in D'(\calL)$, let $x_{ij}$ be the $x$-coordinate of the
intersection of $L_i$ and $L_j$.  We find the median $x_m$ of all such $x_{ij}$'s.
By using our decision algorithm, we can determine in $O(mn)$ time
whether $x^*>x_m$, $x^*<x_m$, or $x^*=x_m$.
If $x^*=x_m$, our decision algorithm will find $q^*$ and we
are done. Otherwise, without loss of generality, we assume $x^*<x_m$.

\begin{figure}[t]
\begin{minipage}[t]{\linewidth}
\begin{center}
\includegraphics[totalheight=1.2in]{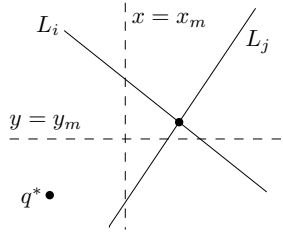}
\caption{\footnotesize The intersection of $L_i$ and $L_j$ is in the first
quarter of the intersection of $x=x_m$ and $y=y_m$ while $q^*$ is in the
interior of the third quarter.}
\label{fig:prune}
\end{center}
\end{minipage}
\vspace*{-0.15in}
\end{figure}

Now for each pair $(L_i, L_j)$ of $D'(\calL)$ with $x_{ij}\geq x_m$ and
$y_{ij}\geq y_m$ (there are at least $|D'(\calL)|/2$ such pairs),
we can prune either $P_i$ or $P_j$, as follows. Indeed, one of the lines in
such a pair $(L_i, L_j)$, say $L_i$, has a negative slope and does
not intersect the region $R=\{(x,y)\ |\ x< x_m,\ y< y_m\}$ (e.g., see
Fig.~\ref{fig:prune}). Suppose $L_i$ is the dividing line of two relevant planes $h_{k_1}$
and $h_{k_2}$ of two uncertain points $P_{k_1}$ and $P_{k_2}$ of $\calP^*$.
It follows that either $h_{k_1}\geq h_{k_2}$ or $h_{k_1}\leq h_{k_2}$ holds
on the region $R$. Since $q^*\in R$, one of $P_{k_1}$ and $P_{k_2}$ can be pruned.

As a summary, the above pruning algorithm prunes at least
$|\calP^*|/16\geq n/32$ uncertain points and the total time is $O(mn)$.

\subsection{Wrapping Things Up}
\label{sec:wrapup}

The algorithm in the above two subsections either computes $q^*$ or prunes at least $n/32$ uncertain points from $\calP$ in $O(mn)$ time. We assume the latter case happens. Then we apply
the same algorithm recursively on the remaining uncertain points for
$O(\log n)$ steps, after which only a constant number of uncertain points
remain. The total running time can be
described by the following recurrence: $T(m,n)=T(m,\frac{31\cdot
n}{32})+O(mn)$. Solving the recurrence gives $T(m,n)=O(mn)$.

Let $\calP'$ be the set of the remaining uncertain points, with
$|\calP'|=O(1)$. Hence, the rectilinear center $q^*$ is determined by $\calP'$.
In other words, $q^*$ is also a rectilinear center of $\calP'$.
In fact, like other standard prune-and-search algorithms, the way we prune uncertain points of $\calP$ guarantees that any rectilinear center of $\calP$ is also a rectilinear center of $\calP'$, and vice versa.
By using an approach similar to that in Section \ref{sec:firstprune},
Lemma \ref{lem:30} finally computes $q^*$ based on $\calP'$ in $O(m)$
time.

\begin{lemma}\label{lem:30}
The rectilinear center $q^*$ can be computed in $O(m)$ time.
\end{lemma}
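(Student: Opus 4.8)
The plan is to reuse the prune-and-search framework of Section~\ref{sec:firstprune}, but now with $|\calP'|=O(1)$, so that the roles of $m$ and $n$ are effectively swapped. Since $\calP'$ has only a constant number of uncertain points and each has $m$ locations, the total number of distinct $x$-coordinates (and likewise $y$-coordinates) among all points of $\calP'$ is $O(m)$. Writing $\calX'=\cup_{P_i\in\calP'}X_i$ and $\calY'=\cup_{P_i\in\calP'}Y_i$, we have $|\calX'|=|\calY'|=O(m)$. As in Section~\ref{sec:firstprune}, I would maintain a rectangle $R$ that always contains $q^*$, initially the entire plane, together with the sorted sublists of $\calX'$ and $\calY'$ lying in the range of $R$ and their predecessor indices, so that all four algorithm invariants hold.

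First I would repeatedly halve the coordinate ranges. In each inner step I take the median $x_m$ of the $x$-coordinates still in the range of $R$ and the median $y_m$ of the $y$-coordinates still in range, and invoke the decision algorithm of Theorem~\ref{theo:10} on the vertical line $x=x_m$ and then on the horizontal line $y=y_m$. Each call either reports that $q^*$ lies on the line, in which case it returns $q^*$ and we stop, or tells us which side of the line contains $q^*$, which lets me shrink $R$ and discard half the coordinates in each range. I then rescan the sublists to update them and the predecessor indices, maintaining all invariants.

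The crucial point for the time bound is that the decision algorithm runs in $O(\sum_{P_i\in\calP'}(a_i+b_i))$ time, and for the current $R$ we have $a_i=|X_i'|+1$ and $b_i=|Y_i'|+1$, so $\sum_i(a_i+b_i)=|\calX'\cap R|+|\calY'\cap R|+O(1)$. Since the number of in-range coordinates halves at each step, the cost of the $j$-th step is $O(m/2^{j}+1)$, and summing over the $O(\log m)$ steps needed to empty both ranges gives $\sum_j O(m/2^{j}+1)=O(m)+O(\log m)=O(m)$. This geometric decay, driven by pruning the coordinate values, is exactly what prevents the naive $O(m\log m)$ bound; I expect verifying it to be the main point requiring care, in particular confirming that the shrunken $R$ genuinely excludes the discarded coordinates so that $a_i$ and $b_i$ really do shrink.

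Finally, once both ranges are empty, no grid line of any $G_i$ crosses the interior of $R$, so $R$ lies inside a single cell $C_i$ of every grid $G_i$ with $P_i\in\calP'$. By Lemma~\ref{lem:10}, the maintained predecessor indices of $C_i$ let me compute the relevant plane $\Ed_i(x,y,C_i)$ in $O(1)$ time for each of the $O(1)$ points. Since $q^*$ is a lowest point of the upper envelope of these $O(1)$ planes, I would solve this constant-size 3D linear program by Megiddo's algorithm~\cite{ref:MegiddoLi83} in $O(1)$ time and output $q^*$. The overall running time is $O(m)$, as claimed.
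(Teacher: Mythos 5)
Your proposal is correct and follows essentially the same route as the paper: rerun the inner prune-and-search of Section~\ref{sec:firstprune} on the constant-size set $\calP'$, using the decision algorithm so that the in-range coordinate counts decay geometrically for a total cost of $O(m)$, and then solve a constant-size 3D LP on the $O(1)$ remaining relevant supporting planes. The only cosmetic difference is that the paper stops once each range holds a \emph{constant} number of values of each $X_i$ and $Y_i$ (so each grid contributes $O(1)$ relevant cells) rather than insisting the ranges become literally empty, which the median-halving step need not achieve since a retained median stays on the boundary of $R$; this does not affect correctness or the time bound.
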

\begin{proof}
Let $c=|\calP'|$, which is a constant. Let $\calX'=\cup_{P_i\in \calP'}X_i$ and
$\calY'=\cup_{P_i\in \calP'}Y_i$. We apply the same recursive
algorithm in Section \ref{sec:firstprune} on $\calX'$ and $\calY'$ for
$O(\log m)$ steps, after which we will obtain a rectangle $R$ such
that $R$ contains $q^*$ and for each $P_i\in \calP'$, the $x$-range
(resp., $y$-range) of $R$ only contains a constant number of values of
$X_i$ (resp., $Y_i$), and thus $R$
intersects a set $G_i(R)$ of only a constant number of cells of $G_i$. Therefore, for
each $P_i\in \calP'$, only the surface patches of $\Ed_i(x,y)$ defined on the
cells of $G_i(R)$ are relevant for computing $q^*$. The supporting
planes of these surface patches can be determined immediately
after the above $O(\log m)$
recursive steps. By the same analysis as in Section
\ref{sec:firstprune}, all above can be done in $O(c\cdot m)$ time.

The above found $O(c)$ ``relevant'' supporting planes such that $q^*$
corresponds to a
lowest point in the upper envelope of them. Consequently, $q^*$ can be
found in $O(c)$ time by applying the linear-time algorithm for the 3D
LP problem \cite{ref:MegiddoLi83} on these $O(c)$ relevant supporting planes.
\qed
\end{proof}

This finishes our algorithm for computing $q^*$, which runs in
$O(mn)$ time.

\begin{theorem}\label{theo:2}
A rectilinear center $q^*$ of the uncertain points of $\calP$
in the plane can be computed in $O(mn)$ time.
\end{theorem}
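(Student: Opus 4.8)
The plan is to assemble the prune-and-search algorithm whose components were developed in Sections \ref{sec:firstprune}--\ref{sec:secondprune} and to verify that its total running time obeys a geometrically decreasing recurrence. First I would recall the reduction established in Section \ref{sec:obser}: computing $q^*$ is equivalent to finding a lowest point $p^*$ in the upper envelope of the $n$ convex functions $\Ed_i(x,y)$, and by convexity this lowest point is unaffected by restricting attention to any single rectangle $R$ known to contain $q^*$. This is precisely the fact that lets each recursive step discard all information outside $R$ without losing correctness.

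Next I would describe one \emph{outer} recursive step operating on the current surviving set $\calP$. The step runs the $2+\log m$ \emph{inner} recursive steps of Section \ref{sec:firstprune}, which repeatedly bisect the coordinate sets $\calX$ and $\calY$ lying inside $R$ using the decision algorithm of Theorem \ref{theo:10}; afterwards the ranges of $R$ contain at most $|\calP|/4$ values of $\calX$ and of $\calY$, so by Observation \ref{obser:20} at least half the points of $\calP$ are prunable, yielding the set $\calP^*$ with $|\calP^*|\ge |\calP|/2$ and all relevant planes $h_i$ computed in linear time. I would then invoke the Megiddo-style pruning of Section \ref{sec:secondprune}: pairing the relevant planes, taking the median of their slopes and then medians of intersection coordinates, and using two further calls to the decision algorithm to locate $q^*$ relative to the median lines, which forces at least $|\calP^*|/16\ge |\calP|/32$ of the points to be prunable. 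The correctness of each individual prune rests on the same convexity argument, namely that for a pruned pair one relevant plane dominates the other throughout the region still known to contain $q^*$.

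The running-time accounting is where care is needed. I would verify that inner step $j$ costs $O(n+mn/2^{j})$, dominated by a constant number of decision-algorithm calls, each linear in $\sum_{i=1}^n(a_i^{j-1}+b_i^{j-1})=O(n+|\calX^{j-1}|+|\calY^{j-1}|)$ by Theorem \ref{theo:10}; summing the geometric series over $j=1,\dots,2+\log m$ then telescopes to $O(mn)$, and the second pruning phase is likewise $O(mn)$. Hence an outer step costs $O(m|\calP|)$ and removes a constant fraction of $\calP$, giving the recurrence $T(m,n)=T(m,\tfrac{31n}{32})+O(mn)$, which solves to $O(mn)$. After $O(\log n)$ outer steps only $O(1)$ points remain, and Lemma \ref{lem:30} computes $q^*$ from them in $O(m)$ time; adding the $O(mn)$ preprocessing of Lemma \ref{lem:10} yields the claimed bound.

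The main obstacle I anticipate is the running-time bookkeeping rather than correctness: one must confirm that truncating the inner recursion at exactly $t=2+\log m$ (rather than $\log mn$) is both sufficient, so that Observation \ref{obser:20} still guarantees $|\calP^*|\ge n/2$, and necessary to keep every outer step linear, since running the inner recursion all the way down would cost $O(mn+n\log mn)$, which can exceed $O(mn)$ when $m=o(\log n)$. Checking that the decision-algorithm calls in both pruning phases remain within $O(mn)$ under this truncation is the delicate part of the argument.
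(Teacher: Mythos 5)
Your proposal is correct and follows essentially the same route as the paper: the same outer/inner prune-and-search structure, the truncation at $t=2+\log m$ inner steps feeding Observation \ref{obser:20}, the Megiddo-style pairing to prune $n/32$ points per outer step, the recurrence $T(m,n)=T(m,\tfrac{31n}{32})+O(mn)$, and Lemma \ref{lem:30} for the constant-size base case. No substantive differences from the paper's argument.
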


\section{Concluding Remarks}\label{sec:conclude}

In this paper, we refine the prune-and-search technique \cite{ref:MegiddoLi83}
to solve in linear time the rectilinear one-center problem on uncertain points in the plane.
Note that the problem can also be considered as the one-center problem on uncertain points in the plane under the $L_1$ distance metric. Since the $L_{\infty}$ and $L_1$ metrics are closely related to each other (by rotating the coordinate axes by $45^{\circ}$), the same problem under the $L_{\infty}$ metric can be solved in linear time as well.

The Euclidean version of the problem seems more natural. Unfortunately, even if $\calP$ contains only one uncertain point $P_1$ and all locations of $P_1$ have the same probability, finding a center for $P_1$ is essentially the $1$-median problem in the plane, which is known as the Weber problem and no exact algorithm exists for it due to the computation challenge \cite{ref:BajajTh88}.

%

\bibliography{2DOneCenter}
\bibliographystyle{plain}

\end{document}